\documentclass[letterpaper, 10 pt, conference]{ieeeconf}  

\IEEEoverridecommandlockouts                              

\usepackage{graphics} 
\usepackage{amsmath}
\usepackage{amssymb}
\usepackage{amsfonts}
\usepackage{mathtools}
\usepackage{makecell}
\usepackage{array}

\newtheorem{theorem}{Theorem}

\newtheorem{definition}{Definition}

\newtheorem{remark}{Remark}

\usepackage[
  style=ieee, 
  doi=false,       
  isbn=false,      
  url=false,       
  eprint=false     
]{biblatex}
\usepackage[short, nocomma]{optidef} 

\usepackage{balance} 
\usepackage{hyperref}
\usepackage{algorithm}
\usepackage{algorithmic}
\usepackage{booktabs}
\usepackage{multirow}
\usepackage{bm}
\usepackage{tikz}
\usetikzlibrary{arrows, arrows.meta, positioning, calc}
\usepackage{xcolor}

\title{\LARGE \bf
Learning Neural Network Controllers with Certified Robust Performance 
via Adversarial Training
}

\author{Neelay Junnarkar$^{1}$, Yasin Sonmez$^{1}$, Murat Arcak$^{1}$
\thanks{This work was supported in part by the NSF grant CNS-2111688.}
\thanks{$^{1}$Neelay Junnarkar, Yasin Sonmez, and Murat Arcak are with the Department of Electrical Engineering and Computer Sciences, University of California, Berkeley, CA 94720 USA (\texttt{neelay.junnarkar@berkeley.edu, yasin\_sonmez@berkeley.edu, arcak@berkeley.edu})}%
}

\begin{document}

\maketitle
\thispagestyle{empty}
\pagestyle{empty}

\begin{abstract}
Neural network (NN) controllers achieve strong empirical performance on nonlinear dynamical systems, yet deploying them in safety-critical settings requires robustness to disturbances and uncertainty.
We present a method for jointly synthesizing NN controllers and {dissipativity certificates} that formally guarantee robust closed-loop performance using adversarial training, in which we use counterexamples to the robust dissipativity condition to guide training.
Verification is done post-training using \(\alpha\),\(\beta\)-CROWN, a branch-and-bound-based method that enables direct analysis of the nonlinear dynamical system.
The proposed method uses quadratic constraints (QCs) only for characterization of non-parametric uncertainties.
The method is tested in numerical experiments on maximizing the volume of the set on which a system is certified to be robustly dissipative.
Our method certifies regions up to \(78\times\) larger than the region certified by a linear matrix inequality-based approach that we derive for comparison.

\end{abstract}
\section{Introduction}

Neural network (NN) controllers have shown remarkable performance in complex control tasks, but their deployment in safety-critical applications is hindered by the lack of formal guarantees on closed-loop behavior. 
Unlike classical controllers whose stability and robustness properties can be analyzed with well-established tools, NN controllers are nonlinear and high-dimensional, and hence difficult to certify.
While recent works have made progress in certifying nominal closed-loop stability for NNs \cite{yin_stability_2022, wangYoulaRENLearningNonlinear2022, pmlr-v235-yang24f, li2025two}, guaranteeing performance under external disturbances and model uncertainty remains a critical but less explored problem \cite{junnarkar2025stability}.

Dissipativity theory \cite{willems_dissipative_1972, arcak_networks_2016} provides a framework that generalizes Lyapunov stability to capture exactly these robustness requirements. By enforcing a dissipation inequality involving a storage function, which is analogous to a Lyapunov function, and a supply rate (e.g., $\ell_2$-gain, passivity), dissipativity enables the certification of input-output properties such as disturbance attenuation in addition to properties such as asymptotic stability.
We further consider robust dissipativity, which requires dissipativity to hold under model uncertainty.
Thus, robust dissipativity extends certificates from stability to robust performance under both external disturbances and model uncertainty.

A widely used method for analyzing such nonlinear and uncertain systems is the framework of integral quadratic constraints (IQCs) \cite{megretski_system_1997}, which abstracts nonlinearities with quadratic terms and renders conditions into computationally tractable linear matrix inequalities (LMIs) \cite{veenman_robust_2016, seiler_stability_2015}.
IQCs have been used to characterize known activation functions of NNs \cite{fazlyab_safety_2022, yin_stability_2022, pauli_training_2022, junnarkar2025stability}.
However, replacing known nonlinearities with bounding IQCs inherently introduces conservativeness. 
To overcome this, alternative techniques like Satisfiability Modulo Theories (SMT) \cite{abate_formal_synthesis, zhou_neural_2024} and branch-and-bound (BaB) methods such as $\alpha$,$\beta$-CROWN \cite{wang2021betacrown, pmlr-v235-yang24f, li2025two} have been utilized for tighter verification.
Adversarial training \cite{madry2018towards, goodfellow2015explaining} has been used in conjunction with these tools to guide synthesis.

The main contribution of this paper is a method to jointly synthesize controllers and certificates that ensure closed-loop robust dissipativity of systems subject to non-parametric uncertainty.
We leverage the complementary strengths of $\alpha$,$\beta$-CROWN and QCs: $\alpha$,$\beta$-CROWN is used for direct analysis of known nonlinear dynamics, and QCs are used for characterization of non-parametric model uncertainty.
We define a notion of robust dissipativity compatible with bounded sets, derive conditions verifiable via $\alpha$,$\beta$-CROWN, and propose an adversarial training algorithm for joint synthesis of the controller and dissipativity certificate.
Final verification is done using \(\alpha\),\(\beta\)-CROWN.
Benefits are demonstrated in two numerical examples in which, given a performance specification and a set describing the model uncertainty, controllers are trained to maximize the volume of the state space on which the closed-loop is certified to be robustly dissipative.

The rest of the paper is organized as follows.
Section~\ref{sec:problem-setup} gives background on modeling problems for verification with $\alpha$,$\beta$-CROWN, on quadratic constraints, and robust dissipativity.
Section~\ref{sec:main-results} derives conditions to be verified with $\alpha$,$\beta$-CROWN to ensure robust dissipativity of the closed-loop system, and derives analogous LMI conditions for comparison in numerical experiments.
Section~\ref{sec:methodology} details the proposed method for training controllers and verifying robust dissipativity.
Section~\ref{sec:experiments} demonstrates the method in two numerical experiments, including comparison with the LMI-based approach.

\subsection{Notation}
$\mathbb{R}_{\geq 0}$ denotes the set of nonnegative real numbers, $\ell_2^n$ the set of square-summable sequences with elements in $\mathbb{R}^n$, and  $\ell_{2e}^n$ the set of sequences with elements in $\mathbb{R}^n$ that are square-summable on 
$0, \dots, K$ for all $K \geq 0$.
We drop the superscript for the dimension of the codomain when clear from context.
\(\Omega_{V,\rho}\) denotes the sublevel set \(\{x|V(x) \leq \rho\}\) of \(V\).
\(B_r^n\) denotes the ball in \(\mathbb{R}^n\) of radius \(r\) centered at the origin.

\section{Background and Problem Setup} \label{sec:problem-setup}

\subsection{\(\alpha\),\(\beta\)-CROWN}

$\alpha$,$\beta$-CROWN enables verification of the specifications of the form \(f(x) > 0\) for all \(x \in \mathbb{R}^n\) in an interval \(x_\text{low} \leq x \leq x_\text{high}\) (inequalities taken elementwise) by computing global lower bounds \(\underline{f} \leq f(x)\). 
If \(\underline{f} > 0\), then the specification \(f(x) > 0\) is certified for all inputs in the domain.
The procedure is highly parallelizable and runs on GPUs via the autoLiRPA library~\cite{xu2020automatic}.
Various logical formulas can equivalently be written as \(f(x) > 0\) by defining operations as follows:
\begin{align*}
    \max\{x,y\} & = \frac{1}{2}\big(x + y \\
    & \hphantom{ = } + \mathrm{ReLU}(x-y) + \mathrm{ReLU}(y - x)\big) \\
    \min\{x,y\} & = -\max\{-x, -y\} \\
    x > 0 \land y > 0 & \iff \min\{x, y\} > 0 \\
    x > 0 \lor y > 0 & \iff \max\{x, y\} > 0 \\
    \lnot (x > 0) & \iff -x \geq 0 \\
    x > 0 \implies y > 0 & \iff \lnot (x > 0) \lor (y > 0) \\
\end{align*}
This enables specifications to be simply written as logical formulas.

\subsection{Dissipativity}

We consider performance requirements 
expressed
with a robust notion of dissipativity that considers model uncertainty.
We define robust dissipativity on subsets of state and external input spaces, allowing us to consider dissipativity on bounded sets that $\alpha$,$\beta$-CROWN can verify.

Consider the system
\begin{equation}\label{eq:diss-sys}
\begin{aligned}
    x_{k+1} & = F(x_k, w_k, d_k) \\
    v_k & = G(x_k, w_k, d_k) \\
    e_k & = H(x_k, w_k, d_k) \\
    w_k & = \Delta(v)_k
\end{aligned}
\end{equation}
where \(x_k \in \mathbb{R}^n\) is the system state, \(v_k \in \mathbb{R}^{n_v}\) and \(w_k \in \mathbb{R}^{n_w}\) are the inputs/outputs of the uncertainty \(\Delta\), \(d_k \in \mathbb{R}^{n_d}\) is an exogenous input, \(e_k \in \mathbb{R}^{n_e}\) is a performance output, \(k \geq 0\), and the uncertainty \(\Delta\) belongs to a set of operators \(\mathbf{\Delta}\).
Assume the system is well-posed: for any initial condition \(x_0 \in \mathbb{R}^n\), \(d \in \ell_{2e}\), and \(\Delta \in \mathbf{\Delta}\), there exist unique sequences \(x, v, w,\) and \(e\) satisfying the system equations.

\begin{definition}[Robust Dissipativity]\label{def:dissipativity}
    Given \(\mathcal{X} \subseteq \mathbb{R}^n\), \(\mathcal{D} \subseteq \mathbb{R}^{n_d}\), and the class of uncertainty $\mathbf{\Delta}$,
    the system \eqref{eq:diss-sys} is robustly dissipative on \((\mathcal{X}, \mathcal{D}, \mathbf{\Delta})\) with respect to a supply rate \(s(d, e)\) if the following hold:
    \begin{enumerate}
        \item If \(\Delta \in \mathbf{\Delta}, x_0 \in \mathcal{X}\), and \(d_k \in \mathcal{D}\) for \(k \geq 0\), then \(x_k \in \mathcal{X}\) for \(k \geq 0\).
        \item There exists a storage function \(V: \mathcal{X} \to \mathbb{R}_{\geq 0}\) with \(V(0) = 0\) such that if \(\Delta \in \mathbf{\Delta}, x_0 \in \mathcal{X}\), and \(d_k \in \mathcal{D}\) for \(k \geq 0\), then \(V(x_K) - V(x_0) \leq \sum_{k=0}^{K-1} s(d_k, e_k)\) for \(K \geq 1\).
    \end{enumerate}
\end{definition}
The first condition is a robust forward invariance (RFI) condition, and the second condition is the performance condition expressed as a dissipation inequality~\cite{arcak_networks_2016}.
Some typical supply rates are \(\gamma^2 \|d\|^2 - \|e\|^2\), which corresponds to an \(\ell_2\)-gain bound of \(\gamma\), and \(d^\top e\), which corresponds to passivity.

\subsection{Quadratic Constraints}

An operator \(\Delta: \ell_{2e} \to \ell_{2e}\) satisfies the quadratic constraint (QC) defined by \(M = M^\top\) if 
\begin{equation*}
    \begin{bmatrix}
        v_k \\ w_k
    \end{bmatrix}^\top 
    M
    \begin{bmatrix}
        v_k \\ w_k
    \end{bmatrix}
    \geq 0
\end{equation*}
for all \(v \in \ell_{2e}\), \(k \geq 0\), and \(w = \Delta(v)\).
QCs of this form can describe properties including sector bounds, gain margins, and multiplication by matrices in a polytope \cite{megretski_system_1997, lessard_analysis_2016}.

Given a set \(\mathcal{M}\) of symmetric matrices, we define the uncertainty set \(\mathbf{\Delta}(\mathcal{M})\) as the set of operators that satisfy the QC defined by \(M\) for all \(M \in \mathcal{M}\).
The simplest family of QCs is \(\{\lambda M_0 | \lambda \geq 0\}\) for a fixed \(M_0 = M_0^\top\).

\section{Main Results}\label{sec:main-results}

We present sufficient conditions, that can be verified with $\alpha$,$\beta$-CROWN and with LMIs, to ensure robust dissipativity on bounded sets \(\mathcal{X}\) and \(\mathcal{D}\) for a set of uncertainties \(\mathbf{\Delta}(\mathcal{M})\).

\begin{theorem}\label{thm:dissipativity}
    Let \(\mathcal{M}\) be a set of QCs, \(V: \mathbb{R}^{n} \to \mathbb{R}_{\geq 0}\) be such that \(V(0) = 0\), and \(\rho > 0\).
    Then \eqref{eq:diss-sys} is robustly dissipative with respect to a supply rate \(s(d, e)\) on \((\Omega_{V,\rho}, B_{\overline{d}}^{n_d}, \mathbf{\Delta}(\mathcal{M}))\) if there exist \(M_\text{rfi}, M_\text{perf} \in \mathcal{M}\) such that
    \begin{align}
        z^\top M_\text{rfi} z \geq 0 \implies V(F(x, w, d)) & \leq \rho \label{eq:pointwise-rfi}\\
        z^\top M_\text{perf} z \geq 0 \implies V(F(x, w, d)) - V(x) & \leq s(d, e) \label{eq:pointwise-sr}
    \end{align}  
    for all \(x \in \Omega_{V,\rho}, w \in \mathbb{R}^{n_w}, d \in B_{\overline{d}}^{n_d}\) where \(z = (G(x, w, d), w)\) and \(e = H(x, w, d)\).
\end{theorem}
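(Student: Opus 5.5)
The plan is an induction on the time index that establishes both parts of Definition~\ref{def:dissipativity} at once, with \(\mathcal{X} = \Omega_{V,\rho}\) and \(\mathcal{D} = B_{\overline{d}}^{n_d}\).

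Fix \(\Delta \in \mathbf{\Delta}(\mathcal{M})\), an initial condition \(x_0 \in \Omega_{V,\rho}\), and a disturbance sequence \(d\) with \(d_k \in B_{\overline{d}}^{n_d}\) for all \(k \geq 0\). By well-posedness, there are unique sequences \(x, v, w, e\) satisfying \eqref{eq:diss-sys}.

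The key observation is that the QC hypotheses hold along this trajectory at every step. Let \(z_k = (v_k, w_k)\) with \(v_k = G(x_k, w_k, d_k)\). Since \(w = \Delta(v)\), \(\Delta \in \mathbf{\Delta}(\mathcal{M})\), and \(M_\text{rfi}, M_\text{perf} \in \mathcal{M}\), the definition of \(\mathbf{\Delta}(\mathcal{M})\) gives
\[
z_k^\top M_\text{rfi} z_k \geq 0 \quad\text{and}\quad z_k^\top M_\text{perf} z_k \geq 0 \quad\text{for all } k \geq 0.
\]
These inequalities hold for all \(k \geq 0\) regardless of where \(x_k\) lies, because the QC is a property of the operator \(\Delta\) on all of \(\ell_{2e}\).

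\emph{Robust forward invariance.} The induction hypothesis is \(x_k \in \Omega_{V,\rho}\); it holds at \(k = 0\) by assumption. At step \(k\), the triple \((x_k, w_k, d_k)\) satisfies \(x_k \in \Omega_{V,\rho}\), \(w_k \in \mathbb{R}^{n_w}\), and \(d_k \in B_{\overline{d}}^{n_d}\), so it lies in the domain where \eqref{eq:pointwise-rfi} is assumed. Its premise holds by the observation above. Hence \(V(x_{k+1}) = V(F(x_k, w_k, d_k)) \leq \rho\), that is, \(x_{k+1} \in \Omega_{V,\rho}\). This proves condition~1.

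\emph{Dissipation inequality.} Since every \(x_k\) now lies in \(\Omega_{V,\rho}\), \eqref{eq:pointwise-sr} applies at each step, with \(e_k = H(x_k, w_k, d_k)\). It gives
\[
V(x_{k+1}) - V(x_k) \leq s(d_k, e_k) \quad\text{for all } k \geq 0.
\]
Summing from \(k = 0\) to \(K-1\) and telescoping yields \(V(x_K) - V(x_0) \leq \sum_{k=0}^{K-1} s(d_k, e_k)\) for every \(K \geq 1\). The storage function in condition~2 is the restriction of \(V\) to \(\Omega_{V,\rho}\); it is nonnegative and satisfies \(V(0) = 0\). Note that \(0 \in \Omega_{V,\rho}\) because \(\rho > 0\).

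The main point needing care is the ordering of the argument. The pointwise conditions \eqref{eq:pointwise-rfi} and \eqref{eq:pointwise-sr} are only assumed for \(x \in \Omega_{V,\rho}\), so the invariance must be proved first, by induction, before the performance bound can be applied at every time step. A secondary subtlety is that the QC must hold at each time instant \(k\) for the actual signal \(v = G(x, w, d)\). This is guaranteed because the QC is pointwise in time and holds for all \(v \in \ell_{2e}\). Well-posedness supplies the needed membership \(v \in \ell_{2e}\), which holds trivially for finite-horizon sequences.
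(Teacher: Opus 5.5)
Your proposal is correct and follows the paper's own argument. The QCs hold at every step because $w=\Delta(v)$ with $\Delta\in\mathbf{\Delta}(\mathcal{M})$; \eqref{eq:pointwise-rfi} and induction then give invariance of $\Omega_{V,\rho}$, after which \eqref{eq:pointwise-sr} applies at each step and telescopes to the dissipation inequality — your extra remarks that $0\in\Omega_{V,\rho}$ and that the storage function is the restriction of $V$ just make explicit what the paper leaves implicit.
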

\begin{proof}
    Consider a trajectory of the system with \(\Delta \in \mathbf{\Delta}(\mathcal{M})\) such that \(V(x_0) \leq \rho\) and \(d_k \in B_{\overline{d}}^{n_d}\) for \(k \geq 0\).  
    Because \(w = \Delta(v)\), it holds that \(z_k^\top M z_k \geq 0\) for all \(k \geq 0\) for
    all \(M \in \mathcal{M}\).
    This implies, by \eqref{eq:pointwise-rfi} and induction, that \(x_k \in \Omega_{V,\rho}\) for all \(k \geq 0\).
    Therefore, \eqref{eq:pointwise-sr} applies for each time step \(k \geq 0\) of the trajectory.
    Summing the right-hand side of the implication in \eqref{eq:pointwise-sr} over \(k = 0, \dots, K-1\) implies \(V(x_K) - V(x_0) \leq \sum_{k=0}^{K-1} s(d_k, e_k)\) for \(K \geq 1\).
    Therefore both conditions of Definition~\ref{def:dissipativity} are met, and \eqref{eq:diss-sys} is robustly dissipative on \((\Omega_{V,\rho}, B_{\overline{d}}, \mathbf{\Delta}(\mathcal{M}))\).
\end{proof}
\begin{remark} \label{rmk:local-qc}
    If the QCs in \(\mathcal{M}\) hold locally in the region
    \(\mathcal{V} = \bigcap_i \left\{(x,w,d) | v^\top L_i^\top L_i v \leq \overline{v}_i^2\right\}\) for some matrices (e.g. selection matrices) \(L_i\) and bounds \(\overline{v}_i \in \mathbb{R}\) where \(v = G(x, w, d)\), then we must ensure \(\Omega_{V,\rho} \subseteq \mathcal{V}\).
    This containment holds if there exist \(M_{\Delta i} \in \mathcal{M}\) such that
    \begin{equation}\label{eq:local-containment}
        z^\top M_{\Delta i} z \geq 0 \implies v^\top L_i^\top L_i v \leq \overline{v}_i^2
    \end{equation}
    for all \(x \in \Omega_{V,\rho}, w \in \mathbb{R}^{n_w}\), \(d \in B_{\overline{d}}^{n_d}\), and \(L_i\), where \(z = (G(x, w, d), w)\).
    

\end{remark}

\subsection{Verification using $\alpha$,$\beta$-CROWN} \label{sec:verification-with-crown}

When \(\Omega_{V,\rho}\) is bounded, the robust forward invariance and performance conditions in Theorem~\ref{thm:dissipativity} are verified using $\alpha$,$\beta$-CROWN by writing them as conditions \(\varphi_\text{rfi}(x, w, d) > 0\) and \(\varphi_\text{perf}(x, w, d) > 0\).
The condition \(\varphi_\text{rfi}\) is constructed by first writing \eqref{eq:pointwise-rfi} as a logical formula, where \(\xi = ({x}, w, d)\):
\begin{equation*}
\begin{aligned}
    & \left((V(x) \leq \rho)  \land (d^\top d \leq  \overline{d}^2) \land (z^\top M_\text{rfi} z \geq 0) \land (\xi^\top \xi \geq \epsilon)\right) \\
    & \implies -V(F({x}, w, d)) + \rho \geq 0
\end{aligned}
\end{equation*}
Note that $\alpha$,$\beta$-CROWN verifies strict inequalities, so we exclude a small region around the origin using \(\xi^\top \xi \geq \epsilon\) and \(\epsilon \approx 10^{-3}\).
This formula is converted into a function \(\mathbb{R}^{n} \times \mathbb{R}^{n_w} \times \mathbb{R}^{n_d} \to \mathbb{R}\) using the constructions in Section~\ref{sec:problem-setup}.
A similar procedure is taken to construct \(\varphi_\text{perf}\).

The boxes for \(x\) and \(d\) over which these conditions are verified are selected to contain \(\Omega_{V,\rho}\) and \(B_{\overline{d}}\), respectively.
We describe the method of determining the box for \(x\) in Section~\ref{sec:methodology}.
The box for \(d\) is \(\{d | \|d\|_\infty \leq \overline{d}\}\).

To construct \(w\)---which is in general unbounded---for use in $\alpha$,$\beta$-CROWN, we leverage the properties of the QCs that \(\Delta\) satisfies to construct \(w\) through a differentiable transformation of \(x\), \(d\), and auxiliary parameters.
Whether such a transformation exists depends on the QC and the structure of \(G\).
If \(\Delta\) satisfies a gain bound \(\gamma^2 \|v_k\|^2 - \|w_k\|^2 \geq 0\) and \(v_k\) can be expressed in terms of \(x_k\) and \(d_k\), then we parameterize \(w_k\) as \(w_k = (\tilde{w}_1 \gamma \|v_k\|/\|\tilde{w}_2\|) \tilde{w}_2\) where \(\tilde{w}_1 \in [-1, 1]\) and \(\|\tilde{w}_2\|_\infty \leq 1\).
This can be generalized to \(G\) that are affine in \(w\) under some additional conditions.
We label the domain of the parameters \(\tilde{w}\) of \(w\) as \(\tilde{\mathcal{W}}\).
Now the conditions \(\varphi_\text{rfi}\) and \(\varphi_\text{perf}\) have domain \(\mathcal{X} \times \tilde{\mathcal{W}} \times \mathcal{D}\).
We refer to the conditions \(\varphi_\text{rfi}\) and \(\varphi_\text{perf}\) 
as functions of \(x\), \(w\), and \(d\) for simplicity of notation.

Parameterizations of \(w\) such as the one presented above ensure \(z^\top M z \geq 0\) by construction, so the term \(z^\top M z \geq 0\) may be removed from \(\varphi_\text{rfi}(x, w, d)\) and \(\varphi_\text{perf}(x, w, d)\).
For more general sets of QCs \(\mathcal{M}\), it may be necessary to assume a bound on \(w\) such that \(w^\top w \leq \overline{w}^2\), sample from the box containing this ball, and integrate this condition into Theorem~\ref{thm:dissipativity} in a manner similar to the bound on disturbance.

\subsection{Verification using LMIs} \label{sec:verification-with-lmis}

For comparison with the use of \(\alpha\),\(\beta\)-CROWN, we now present a QC-based method to analyze robust stability, characterizing both uncertainties and nonlinearities with QCs.
Consider the reformulation of \eqref{eq:diss-sys} as the interconnection of an LTI system and nonlinearities and uncertainties gathered in \(\tilde{\Delta}\):
\begin{equation}\label{eq:uncertain-lti-sys}
\begin{aligned}
    {x}_{k+1} & = A {x}_k + B_w \tilde{w}_k + B_d d_k \\
    \tilde{v}_k & = C_v {x}_k + D_{v w} \tilde{w}_k + D_{v d} d_k \\
    e_k & = C_e {x}_k + D_{e w} \tilde{w}_k + D_{e d} d_k \\
    \tilde{w}_k & = \tilde{\Delta}(\tilde{v})_k
\end{aligned}
\end{equation}
The set of operators \(\tilde{\Delta}\) considered will now be defined in terms of a set of QCs \(\tilde{\mathcal{M}}\), which is constructed from \(\mathcal{M}\) to characterize the original \(\Delta\) and additional QCs to characterize the nonlinearities.

We now provide conditions amenable to analysis via semidefinite programming that are sufficient for Theorem~\ref{thm:dissipativity}.
\begin{theorem} \label{thm:lmi-dissipativity}
    Let \(\tilde{\mathcal{M}}\) be a set of QCs,
    \(P \in \mathbb{R}^{n \times n}\) be such that \(P \succ 0\), and \(\rho > 0\).
    Consider a quadratic supply rate \(s(d, e)\).
    If there exist \(s_\rho \geq 0, s_d \geq 0\), and \(M_\text{rfi}, M_\text{perf} \in \tilde{\mathcal{M}}\) such that \eqref{eq:lmi-pointwise-rfi} and \eqref{eq:lmi-pointwise-sr} hold, then \eqref{eq:diss-sys} is robustly dissipative with respect to \(s\) on \((\Omega_{V,\rho}, B_{\overline{d}}^{n_d}, \mathbf{\Delta}({\mathcal{M}}))\).
    
    In the following, let \(X, Z, D, S\) be such that \(\xi = ({x}, \tilde{w}, d)\), \(z = (\tilde{v}, \tilde{w})\), \({x}_{k+1} = X\xi_k\), \(z_k = Z\xi_k\), \(\begin{bmatrix}
        d_k^\top & e_k^\top
    \end{bmatrix}^\top = D\xi_k\) and \(s(d_k, e_k) = \xi_k^\top D^\top S D \xi_k\).
    \begin{subequations} \label{eq:lmi-pointwise-rfi}
    \begin{align}
        -X^\top P X + \begin{bmatrix}
        s_\rho P & 0 & 0 \\ 0 & 0 & 0 \\ 0 & 0 & s_d I
        \end{bmatrix} - Z^\top M_\text{rfi} Z & \succeq 0 \label{eq:lmi-pointwise-rfi-matrix}\\
        (1 - s_\rho) \rho  - s_d \overline{d}^2 & \geq 0 \label{eq:lmi-pointwise-rfi-scalar}
    \end{align}
    \end{subequations}
    \begin{equation}\label{eq:lmi-pointwise-sr}
        \begin{aligned}
            -X^\top P X + \begin{bmatrix}
            P & 0\\ 0 & 0
        \end{bmatrix}
         - Z^\top M_\text{perf} Z + D^\top S D& \succeq 0
        \end{aligned}
    \end{equation}
\end{theorem}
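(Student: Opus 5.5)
The plan is to reduce Theorem~\ref{thm:lmi-dissipativity} to Theorem~\ref{thm:dissipativity}, taking \(V(x) = x^\top P x\), which is nonnegative with \(V(0)=0\) because \(P \succ 0\). The reformulation \eqref{eq:uncertain-lti-sys} reproduces the trajectories of \eqref{eq:diss-sys}. By construction of \(\tilde{\mathcal{M}}\) from \(\mathcal{M}\) and the QCs of the nonlinearities, for any \(\Delta \in \mathbf{\Delta}(\mathcal{M})\) the lumped operator \(\tilde{\Delta}\) lies in \(\mathbf{\Delta}(\tilde{\mathcal{M}})\). Hence the argument in the proof of Theorem~\ref{thm:dissipativity} applies with \(z=(\tilde v,\tilde w)\) in place of \((v,w)\). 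It therefore suffices to verify the pointwise implications \eqref{eq:pointwise-rfi} and \eqref{eq:pointwise-sr} for all \(\xi=(x,\tilde w,d)\) with \(x^\top P x \le \rho\), \(d^\top d \le \overline{d}^2\), and \(z^\top M z \ge 0\). I will do this by an S-procedure argument: multiply each matrix inequality by \(\xi^\top(\cdot)\xi\).

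For robust forward invariance, I will take \(\xi\) with \(x^\top P x \le \rho\), \(\|d\|\le \overline d\), and \(\xi^\top Z^\top M_\text{rfi} Z \xi \ge 0\). Applying \(\xi^\top(\cdot)\xi\) to \eqref{eq:lmi-pointwise-rfi-matrix}, and using \(X\xi = x_{k+1}\), gives
\[
V(X\xi) \le s_\rho x^\top P x + s_d d^\top d - \xi^\top Z^\top M_\text{rfi} Z\xi \le s_\rho \rho + s_d \overline{d}^2 .
\]
The last step uses \(s_\rho, s_d \ge 0\) together with the sign of the QC term. Then \eqref{eq:lmi-pointwise-rfi-scalar} gives \(s_\rho\rho + s_d\overline d^2 \le \rho\), so \(V(X\xi)\le\rho\), which is \eqref{eq:pointwise-rfi}.

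For performance, applying \(\xi^\top(\cdot)\xi\) to \eqref{eq:lmi-pointwise-sr} gives
\[
V(X\xi) - x^\top P x \le \xi^\top D^\top S D \xi - \xi^\top Z^\top M_\text{perf} Z \xi \le s(d,e)
\]
whenever \(z^\top M_\text{perf} z\ge 0\). This is \eqref{eq:pointwise-sr}, and here no restriction on \(x\) or \(d\) is even needed. Invoking Theorem~\ref{thm:dissipativity} on \(\Omega_{V,\rho}\) and \(B_{\overline d}^{n_d}\) then gives the claim.

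The main obstacle I expect is not the algebra but justifying the transfer between the two uncertainty descriptions. The QCs characterizing the nonlinearities must hold along actual trajectories, and if they hold only locally (as in Remark~\ref{rmk:local-qc}) one must also certify \(\Omega_{V,\rho}\subseteq\mathcal V\). I will handle this by stating that \(\tilde{\mathcal{M}}\) is assumed to be valid for \(\tilde\Delta\) on the region of interest, appending the containment condition \eqref{eq:local-containment} via the same S-procedure if needed, and noting that the robust forward invariance established above keeps trajectories inside that region.
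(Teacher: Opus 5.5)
Your proposal is correct and follows the paper's proof essentially step for step. You take \(V(x)=x^\top P x\), multiply \eqref{eq:lmi-pointwise-rfi-matrix} and \eqref{eq:lmi-pointwise-sr} by \(\xi^\top(\cdot)\xi\) to get S-procedure certificates for \eqref{eq:pointwise-rfi} and \eqref{eq:pointwise-sr}, and then invoke Theorem~\ref{thm:dissipativity}; the paper only differs cosmetically by folding the scalar condition into the single inequality \(\xi^\top Y\xi + s \ge 0\), and your remarks on \(\tilde{\Delta}\in\mathbf{\Delta}(\tilde{\mathcal{M}})\) and on local QC validity make explicit points the paper's proof leaves implicit.
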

\begin{proof}
    Define \(V(x) = x^\top P x\).
    Label the matrix in \eqref{eq:lmi-pointwise-rfi-matrix} as \(Y\) and the scalar in \eqref{eq:lmi-pointwise-rfi-scalar} as \(s\).
    By condition \eqref{eq:lmi-pointwise-rfi}, \(Y \succeq 0\) and \(s \geq 0\), which is equivalent to \(\xi^\top Y \xi + s \geq 0\) for all \(\xi\).
    Expanding, this is equivalent to \(-V(F(x, w, d)) + \rho - s_\rho (-V(x) + \rho) - s_d(-d^\top d + \overline{d}^2) - z^\top M_\text{rfi} z \geq 0\) for all \(x, w, d\).
    This is a sufficient condition for \eqref{eq:pointwise-rfi}.
    Similarly, left- and right-multiplying the matrix in \eqref{eq:lmi-pointwise-sr} by \(\xi^\top\) and \(\xi\) shows it is equivalent to \(-V(F(x,w,d)) + V(x) + s(d, e) - z^\top M_\text{perf} z \geq 0\) for all \(x, w, d\).
    This is a sufficient condition for \eqref{eq:pointwise-sr}.
    Therefore, by Theorem~\ref{thm:dissipativity}, \eqref{eq:diss-sys} is robustly dissipative on \((\Omega_{V,\rho}, B_{\overline{d}}, \mathbf{\Delta}({\mathcal{M}}))\).    
\end{proof}
\begin{remark} \label{rmk:lmi-local-qc}
    If the QCs in \(\tilde{\mathcal{M}}\) hold locally, then the following provides a sufficient condition for \eqref{eq:local-containment}, where \(\tilde{V}_i\) is such that  \(\tilde{L}_i \tilde{v} = \tilde{V}_i \xi\).
    \begin{align*}
        -\tilde{V}_i^\top \tilde{V}_i + \begin{bmatrix}
            s_{\rho i}P & 0 & 0 \\ 0 & 0 & 0 \\ 0 & 0 & s_{di}I
        \end{bmatrix}-Z^\top M_{\Delta i} Z & \succeq 0 \\
        \overline{\tilde{v}}_i^2 - s_{\rho i}\rho - s_{di}\overline{d}^2 & \geq 0
    \end{align*}
    
\end{remark}

Given a particular \(P\), \(\overline{d}\), and \(\tilde{\mathcal{M}}\), the problem of finding the largest sub-level set of \(V(x) = x^\top P x\) such that \eqref{eq:diss-sys} is dissipative on \((\Omega_{V,\rho}, B_{\overline{d}}, \mathbf{\Delta}(\tilde{\mathcal{M}}))\) is formulated as the following optimization problem:
\begin{maxi!}[2]
    {\rho, s_\rho, s_d, M_\text{rfi}, M_\text{perf}}{\rho}
    {}{}
    \addConstraint{s_\rho, s_d}{\geq 0}
    \addConstraint{M_\text{rfi}, M_\text{perf}}{\in \tilde{\mathcal{M}}}
    \addConstraint{\eqref{eq:lmi-pointwise-rfi}, \eqref{eq:lmi-pointwise-sr}}{}
\end{maxi!}
This is an SDP for fixed \(\rho\).
To address bilinearity in \(\rho\) and decision variable \(s_\rho\), we bisect on \(\rho\).
If quadratic constraints \(\tilde{\mathcal{M}}\) are local, variables and constraints are used as per Remark~\ref{rmk:lmi-local-qc}.
It is often useful to parameterize the local QC in terms of the region over which it is valid, and search over this set of local QCs.
We do this by sampling different local QCs and, for each, determining the maximum \(\rho\) by bisection, and taking the maximum of all \(\rho\) found.

\section{Methodology} \label{sec:methodology}

In this section, we detail our algorithm for training controllers and verifying robust dissipativity of systems of the form \eqref{eq:diss-sys} using $\alpha$,$\beta$-CROWN.
We verify the condition \((\varphi_\text{rfi}(x, w, d) > 0) \land (\varphi_\text{perf}(x, w, d) > 0)\), which we label \(\varphi(x, w, d)\) (see Section~\ref{sec:verification-with-crown} for construction of the robust forward invariance performance requirement conditions).
Algorithm~\ref{alg:main} summarizes the training and verification pipeline. 
It has two stages: the first trains learnable parameters---the controller, storage function, and others---so that \(\varphi\) is satisfied and the certified volume is maximized, and the second uses $\alpha$,$\beta$-CROWN to verify the result of the training.

\begin{algorithm}[htbp]
\caption{Adversarial Training and Verification}
\label{alg:main}
\begin{algorithmic}[1]
\REQUIRE System \eqref{eq:diss-sys}, supply rate $s$, quadratic constraints $\mathcal{M}$, initial controller \(\pi_0\), storage function initialization $P \succ 0$, initial bounding box for state $\mathcal{B}_0$, disturbance box $\mathcal{D}$
\STATE \textbf{Training:}
\STATE Initialize \(P_\phi \gets P, V_\phi^{(0)} \gets x^\top P_\phi x, \pi_\theta^{(0)} \gets \pi_0\)\;
\STATE Run PGD + $\alpha$,$\beta$-CROWN bisection on $V_\phi^{(0)} \to \rho_0$
\STATE Sample anchors $\mathcal{A}$ in $V_\phi^{(0)}(x) \in [\alpha_{\mathrm{exp,1}}\rho_0,\; \alpha_{\mathrm{exp,2}}\rho_0]$
\FOR{epoch $= 1, \ldots, N_{\mathrm{epochs}}$}
    \STATE $\tilde{\xi}_i = (x_i, \tilde{w}_i, d_i) \sim \mathrm{Uniform}(\mathcal{B}_j\times \mathcal{\tilde{W}} \times \mathcal{D})$
    \STATE $\tilde{\xi}_i^\text{adv} \gets \text{PGD to estimate}\ \arg\min_{\tilde{\xi}} \varphi(\tilde{\xi})$
    \STATE $\rho \leftarrow \min_{x \in \partial \mathcal{B}_j} V_\phi(x)$
    \STATE $ \mathcal{L}_\text{total} \gets \lambda_\text{s}\mathcal{L}_\text{sample} + \lambda_\text{adv}\mathcal{L}_\text{adv} + \lambda_\text{anc}\mathcal{L}_\text{anchor}$
    \STATE Update $\theta,\phi$, other learnable parameters using \(\mathcal{L}_\text{total}\)
    \IF{$\min_i \varphi(x_i, w_i, d_i) \geq 0$ for $N_{\mathrm{clean}}$ consecutive epochs}
        \STATE Simulate trajectories from probe box $\supset \mathcal{B}_j$; keep converging ones
        \STATE $\mathcal{B}_{j+1} \leftarrow$ bounding box of reachable envelopes (capped by $\eta_{\mathrm{max}}$)
        \STATE \textbf{break} (end training) if $\mathcal{B}_{j+1} = \mathcal{B}_j$
    \ENDIF
\ENDFOR
\STATE \textbf{Verification:} Bisect over $\rho$ with $\alpha$,$\beta$-CROWN (Sec.~\ref{sec:verification})
\RETURN Certified region $\Omega_{V_\phi, \rho^*}$, controller \(\pi_\theta\)
\end{algorithmic}
\end{algorithm}

\subsection{Controller and Certificate}

\subsubsection{Storage Function}

We use a neural network storage function that combines a quadratic term and a nonlinear modulation.
Let $P_\phi \coloneqq \epsilon I + R^\top R$ where $R \in \mathbb{R}^{n \times n}$ is a trainable matrix and $\epsilon > 0$ is a fixed constant, and let $\psi_\phi \colon \mathbb{R}^n \to \mathbb{R}$ be a  neural network.
The storage function $V_\phi\colon \mathbb{R}^n \to \mathbb{R}$ is defined as
\begin{equation}\label{eq:storage}
    V_\phi(x) = \underbrace{x^\top P_\phi\, x}_{V_{\mathrm{quad}}(x)} \;\cdot\; \underbrace{\big(1 + \alpha_\text{nn}\,\tanh\!\big(\psi_\phi(x) - \psi_\phi(0)\big)\big)}_{m_\phi(x)},
\end{equation}
where $x$ is measured relative to the equilibrium $x^*$ and $\alpha_{nn} \in (0,1)$ is a fixed scaling hyperparameter.
The function $V_\phi$ is nonnegative and satisfies \(V_\phi(0) = 0\) by construction.
Further, \(V_\phi\) is positive definite and radially unbounded because \(m_\phi(x) \geq 1 - \alpha_{nn}\) for all \(x\), ensuring \(\Omega_{V_\phi,\rho}\) is compact for all \(\rho\).

We initialize this storage function with a quadratic storage function \(x^\top P x\) obtained by the LMI method.
We find that normalizing \(P\) by its Frobenius norm improves numerical stability during training.
At initialization, the parameters of the last layer of \(\psi_\phi\) are set to 0 so that \(V_\phi(x) = x^\top P_\phi x\).
The Cholesky decomposition yields the \(R\) matrices to initialize \(P_\phi\).
This initialization provides a warm start that accelerates convergence.
During training, both \(R\) and the network \(\psi_\phi\) are updated.

\subsubsection{Controller} \label{sec:controller}

The proposed method is compatible with standard neural network and controller architectures such as multi-layer perceptrons, linear state-feedback, and LTI controllers.
In this paper, we use a recurrent implicit neural network (RINN)~\cite{junnarkar2025stability} due to its compatibility with LMI methods for robustness analysis, which enables us to use LMI methods to construct initializing controllers.
Note that Algorithm~\ref{alg:main} can also be used to verify systems with a fixed controller by fixing its parameters.

A RINN controller is a generalization of an LTI controller constructed by interconnecting an LTI system with an implicit neural network~\cite{el_ghaoui_implicit_2021}.
It is of the form
\begin{equation*}
\pi_\theta \left\{
\begin{aligned}
    \dot{x}_K(t)  & = A_K x_K(t) + B_{Kw} w_K(t) + B_{Ky} y(t) \\
    v_K(t) & = C_{Kv} x_K(t) + D_{Kvw}w_K(t) + D_{Kvy}y(t) \\
    u(t) & = C_{Ku} x_K(t) + D_{Kuw} w_K(t) + D_{Kuy} y(t) \\
    w_K(t) &= \sigma(v_K(t))
\end{aligned}
\right.
\end{equation*}
where \(x_K \in \mathbb{R}^{n_K}\) is the controller state, \(w_K \in \mathbb{R}^{n_{Kw}}\) is the output of the implicit neural network, \(v_K \in \mathbb{R}^{n_{Kw}}\), \(y\) is the output of the plant, \(u\) is the control input of the plant, and \(\sigma\) is the activation function, which we take to be \(\mathrm{ReLU}\).
To simplify the evaluation of the controller, we restrict \(D_{Kvw}\) to be strictly upper triangular.


\subsubsection{QC Multipliers and Supply-Rate Scale}

The conditions of Theorem~\ref{thm:dissipativity} allow for any QC defined by \(M\) in a set \(\mathcal{M}\).
Given a set \(\mathcal{M}\) parameterized through a differentiable transformation of parameters \(\Lambda\), we add these parameters to the set of parameters that are tuned through gradient descent in training.
For example, for a set \(\mathcal{M} = \{\lambda M_0 | \lambda \geq 0\}\), we introduce and train the parameter \(\lambda \geq 0\) (parameterized as \(\lambda = \tilde{\lambda}^2\) where \(\tilde{\lambda} \in \mathbb{R}\)) with \(\lambda\) initialized to 1.0.

Further, the system \eqref{eq:diss-sys} is dissipative with respect to a supply rate \(s\) if and only if it is dissipative with respect to \(\alpha_s s\) where \(\alpha_s > 0\).
To automatically tune this supply rate scale hyperparameter, we introduce it as a learnable parameter (parameterized as \(\alpha_s = \exp(\beta_s)\) with \(\beta_s \in \mathbb{R}\)).
In the dissipation inequality condition \(\varphi_\text{perf}\), we use the supply rate \(\alpha_s s\).
To initialize \(\alpha_s\), we use \(1/\|P\|_F\), where \(P\) is the initial storage function certificate found for certifying the dissipation inequality and \(P/\|P\|_F\) is the normalized matrix used to initialize \(P_\phi\).

\subsection{Training Procedure} \label{sec:training}

We train the controller parameters, the storage function parameters, and the multipliers to satisfy \(\varphi(x, w, d) > 0\) through adversarial training.

\subsubsection{Sub-level Set and Loss}

Given a box \(\mathcal{B}\) for the state, we estimate \(\rho = \min_{x \in \partial \mathcal{B}} V_\phi(x)\), the largest sub-level set of \(V_\phi\) that is contained in \(\mathcal{B}\), through projected gradient descent (PGD)~\cite{goodfellow2015explaining} on each face of \(\mathcal{B}\). PGD is an iterative first-order method that repeats the cycle of taking a gradient step and projecting back onto a constraint set; it was introduced for evaluating adversarial robustness of neural networks~\cite{madry2018towards} and provides an efficient inner maximizer (or minimizer) over bounded domains.

We then attempt to train the learnable parameters to verify the condition \(\varphi\) on \(\Omega_{V_\phi, \rho}\). In each training epoch, a batch of state--disturbance pairs $(x_i, \tilde{w}_i,  d_i)$ is drawn uniformly at random from the bounding box $\mathcal{B} \times \mathcal{\tilde{W}} \times \mathcal{D}$ and \(w\) is computed from \((x_i, \tilde{w}_i, d_i)\) (see Section~\ref{sec:verification-with-crown}).
The training loss penalizes violations of the verification condition at these samples:
\begin{equation}\label{eq:loss-sample}
    \mathcal{L}_\text{sample} = \frac{1}{N_\text{sample}}\sum_{i=1}^{N_\text{sample}} \mathrm{ReLU}\!\big(\!-\varphi(x_i, w_i, d_i)\big).
\end{equation}

\subsubsection{Adversarial Augmentation Loss}

Uniform sampling alone may miss thin violation regions where $\varphi < 0$, since such regions can occupy a negligible fraction of the domain volume.
To address this, we augment the training set with PGD adversarial examples: starting from random initializations in $\mathcal{B} \times \tilde{\mathcal{W}} \times \mathcal{D}$, PGD minimizes $\varphi(x, w, d)$ subject to $(x, \tilde{w}, d) \in \mathcal{B} \times \mathcal{\tilde{W}} \times \mathcal{D}$, concentrating samples where violations are most likely.
Worst-case points are stored in a ranked replay buffer.
We compute a second loss term using adversarial points sampled from this buffer.
\begin{equation}\label{eq:loss-adv}
    \mathcal{L}_\text{adv} = \frac{1}{N_\text{adv}}\sum_{i=1}^{N_\text{adv}} \mathrm{ReLU}\!\big(\!-\varphi(x_i^\text{adv}, w_i^\text{adv}, d_i^\text{adv})\big).
\end{equation}

\subsubsection{Growth Incentive via Anchor Sampling Loss}

To encourage the sub-level set \(\Omega_{V_\phi, \rho}\) to grow,
we sample \emph{anchor} states in a band around the initial verified region and penalize high storage-function values at those states.
Specifically, let $\rho_0$ denote the initial PGD-verified sub-level set value and $V_\phi^{(0)}$ the initial storage function.
We sample anchor states whose initial storage-function values lie in $[\alpha_{\mathrm{exp},1}\,\rho,\; \alpha_{\mathrm{exp},2}\,\rho]$, where $\alpha_{\mathrm{exp},1} < 1$ penalizes for shrinking below the initial certified region and $\alpha_{\mathrm{exp},2} > 1$ rewards expansion beyond it.
The regularization term is
\begin{equation}\label{eq:loss-anchor}
    \mathcal{L}_{\mathrm{anchor}} = \frac{1}{|\mathcal{A}|}\sum_{x \in \mathcal{A}} \mathrm{ReLU}\!\left(\frac{V_\phi(x)}{\rho} - 1\right),
\end{equation}
where $\mathcal{A}$ is the set of anchor states and $\rho$ is the current sub-level set value.
This term penalizes anchor states that lie outside the current sublevel set, encouraging the optimizer to reshape $V_\phi$ so that $\Omega_{V_\phi, \rho}$ expands to include them.
Hyperparameters are listed in Table~\ref{tab:params}.

The total loss is then a weighted sum of individual loss terms.
\begin{equation}\label{eq:total-loss}
    \mathcal{L}_\text{total} = \lambda_\text{s}\mathcal{L}_\text{sample} + \lambda_\text{adv}\mathcal{L}_\text{adv} + \lambda_\text{anc}\mathcal{L}_\text{anchor}
\end{equation}

\begin{figure*}[htbp]
\centering
\includegraphics[width=0.7\textwidth]{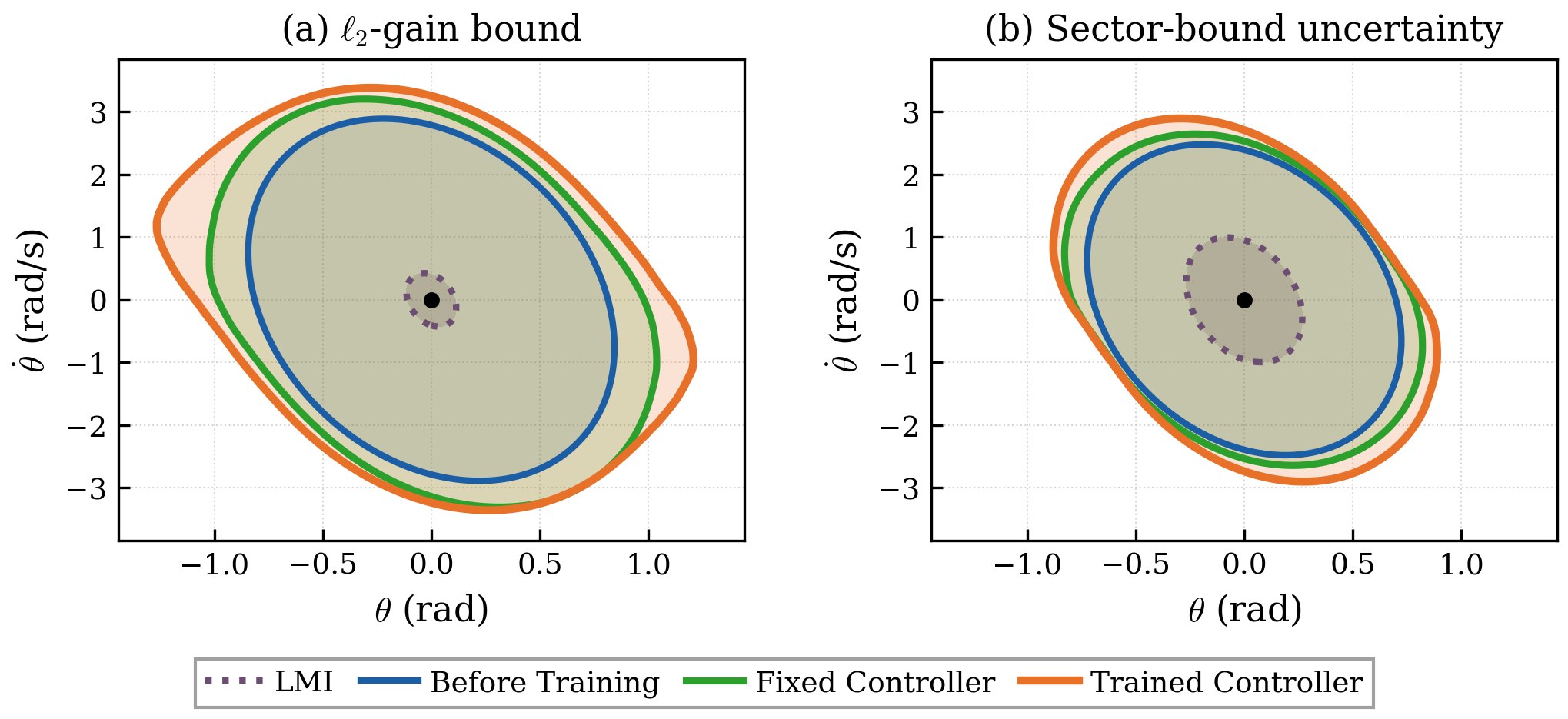}
\caption{Certified regions projected onto the plant state $(\theta, \dot\theta)$ for (a) the $\ell_2$-gain bound experiment and (b) the robust stability experiment under sector-bound uncertainty ($\alpha = 0.25$). Four methods are compared: LMI baseline (purple, dotted), before training (blue), training only the storage function with a fixed controller (green), and jointly training both controller and storage function (orange). 
}
\label{fig:roa}
\end{figure*}

\subsubsection{Simulation-Based Domain Expansion}

We increase the size of \(\mathcal{B}\), and therefore the size of the sub-level set we aim to verify, using trajectory simulations, following the approach of~\cite{li2025two}.
Expansion is triggered when PGD finds no violations for $N_\mathrm{clean}$ consecutive epochs, indicating that the adversarial sampler can no longer falsify the condition over the current region.
At each expansion step, we sample initial conditions from a \emph{probe box} that extends beyond the current domain $\mathcal{B}_j$, simulate the system for $K_\mathrm{sim}$ steps, and retain only those trajectories that converge to the origin (measured by the terminal state norm falling below a threshold $\delta_\mathrm{conv}$).
The new domain $\mathcal{B}_{j+1}$ is the element-wise bounding box of the reachable envelopes of the converging trajectories, capped by a maximum growth factor $\eta_\mathrm{max}$.
This procedure expands the domain where the current controller stabilizes the system, while avoiding regions where it does not.
Training terminates when the domain no longer grows (no converging trajectories reach beyond the current box).

\subsection{Formal Verification} \label{sec:verification}

After training, we fix the trained parameters and use $\alpha$,$\beta$-CROWN to verify $\varphi_\text{rfi}(x,w,d) > 0$ and $\varphi_\text{perf}(x, w, d) > 0$ through bisection on the level set \(\rho\) to maximize the volume of the verified region. 
First, given the final box \(\mathcal{B}\) on state from training, we estimate $\rho_{\max} = \min_{x \in \partial \mathcal{B}} V_\phi(x)$.
Then, the bisection procedure is as follows.
\begin{enumerate}
    \item Initialize the search bracket: first test $\rho_0 = \beta \rho_\text{max}$ where $\beta < 1$ (we use $\beta \sim 0.95$).
    If $\alpha$,$\beta$-CROWN verifies $\rho_0$: bracket $= [\rho_0, \mu\rho_\text{max}]$.
    If not: bracket $= [\rho_\text{max}/\mu, \rho_0]$ where $\mu > 1$ is a multiplier.
    \item At each bisection step, PGD pre-screens the candidate $\rho$; if a counterexample is found, the candidate is rejected without calling $\alpha$,$\beta$-CROWN.
    \item $\alpha$,$\beta$-CROWN verifies the sub-problem $\varphi_\text{rfi}(x, w, d) > 0$. If it fails, the upper bound is lowered, and we go to step (2).
    \item $\alpha$,$\beta$-CROWN verifies the sub-problem  $\varphi_\text{perf}(x, w, d) > 0$. If it passes, the lower bound is raised; otherwise, the upper bound is lowered.
    \item Repeat from step (2) until the bracket width is below a tolerance $\delta$.
\end{enumerate}

\begin{remark}[Tight bounding boxes]
For each candidate $\rho$, we compute a tight bounding box for $\Omega_{V_\phi, \rho}$ by sampling and finding the element-wise extrema of points satisfying $V(x) \leq \rho$, expanded by a small margin.
This replaces the full domain $\mathcal{B}$ in the specification and tightens the $\alpha$,$\beta$-CROWN bounds.
\end{remark}

As a baseline, we run the full bisection procedure on the initial quadratic storage function $V_\phi^{(0)}(x) = x^\top P_\phi\, x$ (with zero neural correction) with the initial controller.
Comparing these pre-training certified regions to the post-training results isolates the improvement due to the nonlinear storage function and controller training.

\section{Numerical Experiments} \label{sec:experiments}

We evaluate the proposed framework on a torque-actuated inverted pendulum\footnote{\url{https://github.com/neelayjunnarkar/local-robust-dissipativity}}.
In two experiments, we present two results using the method presented in Section~\ref{sec:methodology}: one where we train both the controller and dissipativity certificates to maximize the certified volume, and a second where we train only the dissipativity certificates and leave the controller fixed.
Where applicable, we compare the results with the regions certified using the LMI techniques presented in Section~\ref{sec:verification-with-lmis}.

\subsection{Certifying \(\ell_2\)-Gain Bound}  \label{sec:exp-l2gain}

Consider the following model of an inverted pendulum with a disturbance entering into the control input.
\begin{equation*}
\begin{aligned}
\dot{x}_{P1}(t) &= x_{P2}(t) \\
\dot{x}_{P2}(t)  &= -\frac{\mu}{m \ell^2}x_{P2}(t) + \frac{g}{\ell}\sin(x_{P1}(t)) \\
& \hphantom{=.} + \frac{1}{m\ell^2}(\mathrm{sat}_{\overline{u}}(u(t))+d(t)) \\
y(t) & = x_P(t)
\end{aligned}
\end{equation*}
The control input \(u\) is saturated in the interval \([-\overline{u}, \overline{u}]\) and the disturbance is bounded in absolute value by \(0.1 \overline{u}\).
All parameters are given in Table~\ref{tab:params}.

We certify dissipativity of the closed-loop of the plant and controller from disturbance \(d\) to performance output \(e = x_P\) with respect to the \(\ell_2\) gain supply rate \(s(d, e) = \gamma^2 \|d\|^2 - \|e\|^2\).
We initialize the RINN controller with one synthesized using LMI methods similar to \cite{junnarkar2025stability}.
We use no saturation on the controller, no bounds on the disturbance, and a quadratic constraint on \(\sin\) that is valid for \(x_{P1} \in [-\pi, \pi]\).
This comes with an associated quadratic storage function.
The plant and controller are discretized individually using Euler integration, and Theorem~\ref{thm:dissipativity} is applied to the closed-loop system, which has state \(x = (x_P, x_K)\).

We compute four certified regions \(\Omega_{V, \rho}\): 

\begin{enumerate}
    \item Using the initial controller and its associated storage function using the LMI method in Section~\ref{sec:verification-with-lmis}.
    \item Using the initial controller and its associated storage function using $\alpha$,$\beta$-CROWN as in Section~\ref{sec:verification}.
    \item  Using the initial controller and training the certificates with Algorithm~\ref{alg:main}.
    \item Training both the controller and the certificates with Algorithm~\ref{alg:main}.
\end{enumerate}

For the LMI method, we apply local sector constraints to \(\sin\) and to the saturation function \(\mathrm{sat}(v) = \min\{1, \max\{-1, v\}\}\) (the function \(\mathrm{sat}_{\overline{u}}(u) = \overline{u} \ \mathrm{sat}(u/\overline{u})\)).
We grid search over local sector bounds on \(\sin\) valid for \(|x_{P1}| \leq \overline{x}_{P1}\) with \(\overline{x}_{P1} \in [0, \pi]\) and similarly grid search over local sector bounds on \(\mathrm{sat}\) valid for \(|v| \leq \overline{v}\) with \(\overline{v} \in [1.0, 5.0]\) using a resolution of 0.1.

Table~\ref{tab:combined-results} reports the volumes of the certified sub-level sets \(\Omega_{V,\rho}\) projected onto the plant state \(x_P\), and the computation times.
Figure~\ref{fig:roa} visualizes the projections of $\Omega_{V_\phi, \rho}$ onto the two-dimensional plane of plant states, given as:
\begin{equation} \label{eq:proj-roa}
    \mathcal{P}(\Omega_{V, \rho}) = \{x_P | \min_{x_K} V(x_P, x_K) \leq \rho\}
\end{equation}
and is estimated by minimizing \(V\) over the controller states at each point of a fine grid.
All experiments are run on a computer with a Nvidia RTX 5090 and an AMD 9950 X3D.


\begin{table}[!htbp]
\centering
\caption{Volumes and Computation Times}
\label{tab:combined-results}
\setlength{\tabcolsep}{1pt}
\begin{tabular}{@{}l 
>{\centering\arraybackslash}p{1.1cm} 
>{\centering\arraybackslash}p{1.1cm} 
>{\centering\arraybackslash}p{1.2cm} 
c c c@{}}
\toprule
& \multicolumn{3}{c}{\textbf{Volumes} $\mathcal{P}(\Omega_{V,\rho})$ ($\mathrm{rad{\cdot}rad/s}$)} & \multicolumn{3}{c}{\textbf{Times} (min.)} \\
\cmidrule(lr){2-4} \cmidrule(l){5-7}
\textbf{Method} & \textbf{Abs.}
  & \textbf{vs.\ LMI}
  & \textbf{vs.\ Before train}
  & \textbf{Train} & \textbf{Verif.} & \textbf{Total} \\
\midrule
\multicolumn{7}{c}{\textbf{$\ell_2$-Gain Bound}} \\
\midrule
1) LMI                &  $0.16    $  & $1.0\times$   & ---           & N/A   & $0.03$ & $0.03$ \\
2) Before Training    & $8.2$  & $51.2\times$  & $1.0\times$   & N/A   & 9      & 9      \\
3) Fixed Controller   & $10.8$ & $67.5\times$  & $1.32\times$  & 3     & 33     & 36     \\
4) Trained Controller & $12.5$ & $78.1\times$  & $1.52\times$  & 6     & 26     & 32     \\
\midrule
\multicolumn{7}{c}{\textbf{Robust Stability}} \\
\midrule
1) LMI                & $0.80   $ & $1.0\times$   & ---           & N/A   & $0.03$ & $0.03$ \\
2) Before Training    & $6.0$  & $7.6\times$   & $1.0\times$   & N/A   & $21$   & $21$   \\
3) Fixed Controller   & $7.1$  & $8.9\times$   & $1.17\times$  & $7$   & $99$   & $106$  \\
4) Trained Controller & $8.1$  & $10.2\times$  & $1.35\times$  & $19$  & $96$   & $115$  \\
\bottomrule
\end{tabular}%
\end{table}

The LMI baseline yields the smallest certified volume.
Even before training, verifying the same closed-loop system using the same storage function with $\alpha$,$\beta$-CROWN gives a $51{\times}$ larger region, illustrating the conservatism of using QCs to characterize the nonlinearities.
Training the storage function increases the verified volume to $67.5{\times}$ the LMI baseline and $1.32{\times}$ the region verified with $\alpha$,$\beta$-CROWN without training.
Joint training of the controller and the storage function results in the largest verified volume at $78.1{\times}$ the LMI baseline, $1.52{\times}$ the region verified with $\alpha$,$\beta$-CROWN without training, and $1.16{\times}$ the region obtained by training the storage function alone.
This demonstrates that substantial enlargement of the certified region is achievable even without modifying the controller, and that co-training provides additional benefit.


A note on the gap between the volume that is not falsified by PGD attacks and the volume that is certified with $\alpha$,$\beta$-CROWN:
With the initial controller and storage function, PGD and $\alpha$,$\beta$-CROWN areas are approximately identical ($8.2$ vs.\ $8.2$).
After training, a modest gap appears (e.g., $13.1$ vs.\ $12.5$ for the trainable case), reflecting the inevitable relaxation in $\alpha$,$\beta$-CROWN's bound propagation through the more expressive trained storage function.


\subsection{Robust Stability under Model Uncertainty} \label{sec:exp-robust}

\begin{figure}[htbp]
\centering
\begin{tikzpicture}[auto, >=latex', 
    block/.style={draw, rectangle, minimum height=1.0cm, minimum width=1.2cm, inner sep=5pt},
    sum/.style={draw, circle, minimum size=0.4cm, inner sep=0pt},
    dot/.style={circle, draw, fill=black, inner sep=0pt, minimum size=3pt}
]
    \tikzset{every node/.style={font=\small}}
    \node [sum] (sum) at (0,0) {$+$};
    \node [block] (plant) at (2.2, 0) {$P$};
    \node [block] (delta) at (0, 1.5) {$\Delta$};
    \node [name=u] at (-2.8, 0) {$\tilde{u}$};
    \coordinate (branch_pt) at (-1.5, 0); 
    \node [left] at (-1.5, 0.75) {$\tilde{u}$}; 
    \node [right] at (3.5, 0) {$x$}; 
    \draw [->] (sum.east) -- node[above] {$\tilde{u}+w$} (plant.west);
    \draw [->] (plant.east) -- (3.5, 0);
    \draw [->] (delta.south) -- node[right] {$w$} (sum.north);
    \draw [-] (u.east) -- (branch_pt);
    \node [dot] at (branch_pt) {};
    \draw (branch_pt) -- (-1.5, 1.5);
    \draw [->] (-1.5, 1.5) -- (delta.west);
    \draw [->] (branch_pt) -- (sum.west);
\end{tikzpicture}
\caption{Uncertainty structure: Plant input is sum of control \(\tilde{u}\) and \(w = \Delta(\tilde{u})\), where \(\Delta\) is uncertain and satisfies \(\|w\| \leq \alpha \|\tilde{u}\|\) pointwise in time.}
\label{fig:uncertainty_structure}
\end{figure}
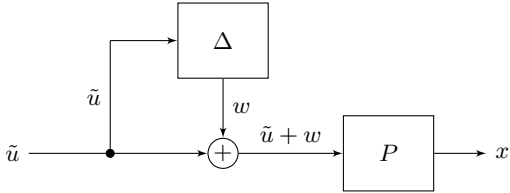

We now apply the framework to verify robust stability (as in, use supply rate \(s(d,e) = 0\)) for the same pendulum
subject to non-parametric uncertainty $\Delta$ on the input channel (Figure~\ref{fig:uncertainty_structure}).
The uncertain dynamics take the form
\begin{equation*}
\begin{aligned}
\dot{x}_{P1}(t) &= x_{P2}(t) \\
\dot{x}_{P2}(t)  &= -\frac{\mu}{m \ell^2}x_{P2}(t) + \frac{g}{\ell}\sin(x_{P1}(t)) \\
& \hphantom{=.} + \frac{1}{m\ell^2}\big(\tilde{u}(t)+ \Delta(\tilde{u}(t))\big) \\
\tilde{u}(t) & = \mathrm{sat}_{\overline{u}}(u(t)) \\
y(t) & = x_P(t)
\end{aligned}
\end{equation*}
where $\Delta$ is characterized by the sector-bound QC
\begin{equation}\label{eq:sector-qc}
    \begin{bmatrix} v \\ w \end{bmatrix}^\top
    \underbrace{\begin{bmatrix} \alpha^2 & 0 \\ 0 & -1 \end{bmatrix}}_{M_0}
    \begin{bmatrix} v \\ w \end{bmatrix} \geq 0,
\end{equation}
so that $\|w(t)\| \leq \alpha \|u(t)\|$ for all \(t\).
We use the same initializing controller and storage function.

Note that verifying stability of this uncertain system implies that, among other uncertainties, the controller stabilizes the plants \(g P\) for all \(g \in [1 - \alpha, 1 + \alpha]\) where \(P\) is the nominal plant with \(\Delta = 0\).
Therefore, this uncertainty is sufficient for analyzing gain margin conditions.

To construct a bounded parameterization suitable for $\alpha$,$\beta$-CROWN, we write $w_k = \alpha\, \tilde{w}_k\, v_k, \quad \tilde{w}_k \in [-1, 1]$, which satisfies~\eqref{eq:sector-qc} by construction.
The verification conditions $\varphi_\text{rfi}$ and $\varphi_\text{perf}$ are then defined over the bounded domain $\mathcal{X} \times [-1,1]$.
Results are reported in Table~\ref{tab:combined-results}.
The LMI baseline certifies $0.80\,\mathrm{rad{\cdot}rad/s}$ with the uncertainty characterized by a sector-bound QC.
Before training, $\alpha$,$\beta$-CROWN certifies a projected area of $6.0\,\mathrm{rad{\cdot}rad/s}$ ($7.6{\times}$ the LMI baseline), smaller than the $\ell_2$-gain case due to the cost of certifying stability for all plants in the uncertainty set.
Training the storage function with a fixed controller increases the $\alpha$,$\beta$-CROWN-verified area to $7.1$ ($+17\%$ over the initial certificate).
Joint training of the controller and storage function yields $8.1\,\mathrm{rad{\cdot}rad/s}$ ($+35\%$ over the initial, $1.14{\times}$ the fixed-controller result).
Figure~\ref{fig:roa} visualizes the certified regions.

\begin{table}[htbp]
\centering
\caption{Physical and algorithm parameters.}
\label{tab:params}
\resizebox{\columnwidth}{!}{
\begin{tabular}{@{}llcc@{}}
\toprule
\textbf{Parameter} & \textbf{Symbol} & \boldmath$\ell_2$\textbf{-Gain Bound} & \textbf{Robust stability} \\
\midrule
\multicolumn{4}{@{}l}{\emph{Plant (shared)}} \\
Mass & $m$ & \multicolumn{2}{c}{$0.15$\,kg} \\
Length & $\ell$ & \multicolumn{2}{c}{$0.5$\,m} \\
Damping & $\mu$ & \multicolumn{2}{c}{$0.1$\,Nm\,s/rad} \\
Gravity & $g$ & \multicolumn{2}{c}{$9.81$\,m/s$^2$} \\
Time step & $\Delta t$ & \multicolumn{2}{c}{$0.01$\,s} \\
Torque limit & $\overline{u}$ & \multicolumn{2}{c}{$0.75$\,Nm ($\approx 1.02\,mgl$)} \\
\midrule
\multicolumn{4}{@{}l}{\emph{Supply rate / uncertainty}} \\
$\ell_2$-gain bound & $\gamma$ & $100$ & --- \\
Disturbance bound & $\overline{d}$ & $0.075$\,Nm  & --- \\
Uncertainty bound & $\alpha$ & --- & $0.25$ \\
Auxiliary variable & $\tilde{w}$ & --- & $\in [-1,\,1]$ \\
Supply rate & $s(d,e)$ & $\gamma^2\|d\|^2 {-} \|e\|^2$ & $0$ \\
\midrule
\multicolumn{4}{@{}l}{\emph{Controller (shared)}} \\
RINN internal states & $n_K$ & \multicolumn{2}{c}{$2$} \\
RINN implicit nodes & $n_{Kw}$ & \multicolumn{2}{c}{$8$} \\
\midrule
\multicolumn{4}{@{}l}{\emph{Storage function}} \\
Storage NN hidden & --- & $[128,\,128]$ & $[32,\,32]$ \\
Hidden activation & --- & \multicolumn{2}{c}{LeakyReLU} \\
Neural scale & $\alpha_{\mathrm{nn}}$ & $0.5$ & $0.25$ \\
Domain (initial) & $\mathcal{B}_0$ & \multicolumn{2}{c}{$[-3,3] {\times} [-9,9] {\times} [-4,4]^2$} \\
\midrule
\multicolumn{4}{@{}l}{\emph{Anchor sampling}} \\
Anchor count & $|\mathcal{A}|$ & \multicolumn{2}{c}{$1{,}024$} \\
Inner anchor factor & $\alpha_{\mathrm{exp},1}$ & \multicolumn{2}{c}{$0.75$} \\
Outer anchor factor  & $\alpha_{\mathrm{exp},2}$ & $1.3$ / $1.2$ (train/fix) & $1.3$ / $1.1$ (train/fix) \\
Anchor weight & $\lambda_a$ & \multicolumn{2}{c}{$0.1$} \\
Loss weights & \(\lambda_\text{s}, \lambda_\text{adv}\) & \multicolumn{2}{c}{$1$}\\
\bottomrule
\end{tabular}
}
\end{table}

\section{Conclusion} \label{sec:conclusion}

This paper presented a framework for training neural network controllers and corresponding certificates to ensure robust dissipativity of the closed-loop system.
A key advantage of this approach is its ability to leverage the complementary strengths of $\alpha$,$\beta$-CROWN for direct analysis of known nonlinear dynamics and QCs for non-parametric model uncertainty.
As demonstrated in numerical experiments, this method significantly reduces conservativeness compared to LMI-based techniques, yielding substantially larger certified volumes, although at the expense of higher computation time.
Future work includes extending the framework from QCs to integral quadratic constraints.

\balance

\renewcommand*{\bibfont}{\normalfont\small}
\printbibliography

\end{document}